\newtheorem{theorem}{Theorem}[section]
\newtheorem{proposition}[theorem]{Proposition}
\newtheorem{definition}[theorem]{Definition}
\newcommand\remove[1]{}
\newcommand{\nc}{\newcommand}
\nc\bfa{{\boldsymbol a}}\nc\bfA{{\bf A}}\nc\cA{{\mathcal A}}
\nc\bfb{{\boldsymbol b}}\nc\bfB{{\boldsymbol B}}\nc\cB{{\mathcal B}}
\nc\bfc{{\boldsymbol c}}\nc\bfC{{\bf C}}\nc\cC{{\mathcal C}}
\nc\sC{{\mathscr C}}
\nc\bfd{{\boldsymbol d}}\nc\bfD{{\bfD}}
\nc\cD{{\mathcal D}}
\nc\bfe{{\boldsymbol e}}\nc\bfE{{\bf E}}\nc\cE{{\mathcal E}}
\nc\bff{{\boldsymbol f}}\nc\bfF{{\bf F}}\nc\cF{{\mathcal F}}
\nc\bfg{{\boldsymbol g}}\nc\bfG{{\bf G}}\nc\cG{{\mathcal G}}
\nc\bfh{{\boldsymbol h}}\nc\bfH{{\bf H}}\nc\cH{{\mathcal H}}
\nc\bfi{{\boldsymbol i}}\nc\bfI{{\bf I}}\nc\cI{{\mathcal I}}\nc\sI{{\mathscr I}}
\nc\bfj{{\boldsymbolj}}\nc\bfJ{{\bf J}}\nc\cJ{{\mathcal J}}
\nc\bfk{{\boldsymbolk}}\nc\bfK{{\bf K}}\nc\cK{{\mathcal K}}
\nc\bfl{{\boldsymboll}}\nc\bfL{{\bf L}}\nc\cL{{\mathcal L}}
\nc\bfm{{\boldsymbolm}}\nc\bfM{{\bf M}}\nc\cM{{\mathcal M}}
\nc\bfn{{\boldsymboln}}\nc\bfN{{\bf N}}\nc\cN{{\mathcal N}}
\nc\bfo{{\boldsymbolo}}\nc\bfO{{\bf O}}\nc\cO{{\mathcal O}}
\nc\bfp{{\boldsymbolp}}\nc\bfP{{\bf P}}\nc\cP{{\mathcal P}}
\nc\eP{{\EuScriptP}}\nc\fP{{\mathfrak P}}
\nc\bfq{{\boldsymbol q}}\nc\bfQ{{\bf Q}}\nc\cQ{{\mathcal Q}}
\nc\bfr{{\boldsymbol r}}\nc\bfR{{\bf R}}\nc\cR{{\mathcal R}}
\nc\bfs{{\boldsymbol s}}\nc\bfS{{\boldsymbol S}}\nc\cS{{\mathcal S}}
\nc\bft{{\boldsymbol t}}\nc\bfT{{\bf T}}\nc\cT{{\mathcal T}}
\nc\bfu{{\boldsymbol u}}\nc\bfU{{\bf U}}\nc\cU{{\mathcal U}}
\nc\bfv{{\boldsymbol v}}\nc\bfV{{\bf V}}\nc\cV{{\mathcal V}}
\nc\bfw{{\boldsymbol w}}\nc\bfW{{\bf W}}\nc\cW{{\mathcal W}}
\nc\bfx{{\boldsymbol x}}\nc\bfX{{\bf X}}\nc\cX{{\mathcal X}}
\nc\bfy{{\boldsymbol y}}\nc\bfY{{\bf Y}}\nc\cY{{\mathcal Y}}
\nc\bfz{{\boldsymbol z}}\nc\bfZ{{\bf Z}}\nc\cZ{{\mathcal Z}}
\begin{document}

\tikzset{
  block/.style    = {draw, thick, rectangle},
  round/.style      = {draw, circle,node distance = 2cm}, 
}
\newcommand{\suma}{\Large$+$}

\title{\vspace*{0.25in} Universal Source Polarization and an Application to a Multi-User Problem }

\author{\IEEEauthorblockN{Min Ye}
\IEEEauthorblockA{Department of ECE/ISR\\
University of Maryland\\
Email: yeemmi@gmail.com}
\and
\IEEEauthorblockN{Alexander Barg}
\IEEEauthorblockA{Department of ECE/ISR\\
University of Maryland\\
Email: abarg@umd.edu}}

\maketitle
{\renewcommand{\thefootnote}{}\footnotetext{

\vspace{-.2in}
 
\noindent\rule{1.5in}{.4pt}

\noindent$^\ast$ Research supported in part by NSF grants CCF1217245 and CCF1217894.
}}
\renewcommand{\thefootnote}{\arabic{footnote}}
\setcounter{footnote}{0}

\begin{abstract}
We propose a scheme that universally achieves the smallest possible compression rate for a class of sources with side information,
and develop an application of this result for a joint source channel coding problem over a broadcast channel.
\end{abstract}

\maketitle

\section{Introduction}
Polar coding, introduced by Ar{\i}kan in \cite{Arikan09}, has attracted much attention for its ability to achieve capacity of binary-input memoryless output-symmetric channels under a low-complexity decoding procedure (a successive cancellation, or SC decoder). The polar code construction has been extended in a
number of ways including non-binary alphabets as well as asymmetric channels, source coding problems, and various multi-user 
scenarios. The original construction of polar codes depends on the communication channel for the transmission.
At the same time, it is often desirable to have a coding scheme that attains capacity of channels irrespective of the 
structure of the transition probabilities. This feature, termed universal coding, has been recently studied in 
several works. One line of research was started in the works by Korada \cite{Korada09} who proved that polar codes constructed
for a given communication channel $V$ can support reliable transmission with SC decoding over a channel $W$ that is a stochastically degraded
version of $V$.  
Sutter and Renes \cite{Sutter13} extended this result to channels that are ``less noisy'' with respect to the original channel.
Similar results for universal source polarization were obtained earlier by Abbe \cite{Abbe10}.
These works assume that the receiver has full knowledge of the channel/source statistics.
In a recent work Alsan \cite{Alsan13} considered conditions for reliable communication with polar codes when both the encoder and 
decoder are designed for a channel different from the actual communication channel (``mismatched decoding'').

Another line of works is concerned with the classical definition of universal coding, 
aiming to achieve compound capacity of a set of channels using modified polar codes. Along these lines, Hassani and Urbanke showed \cite{Hassani09} that polar codes under SC decoding cannot achieve the compound capacity of a set of binary-input output-symmetric channels. In a later work
\cite{Hassani13} they proposed several modifications of the original polar coding scheme 
that achieve the compound capacity at the cost of increasing the decoding delay. A similar result was also established
by {\c S}a{\c s}o{\u g}lu and Wang \cite{Wang13} (see also Mahdavifar et al.~\cite{mah13}).

In this paper we address the universality problem for source coding with 
side information using polar codes, and use this scheme to attain the region of achievable rates in a joint source-channel coding problem proposed by Tuncel in \cite{Tuncel06}.
The formal definition of the universal source coding problem with side information is given in Section \ref{2}, where we also
present our coding scheme. The scheme itself forms a modification of the ``chaining'' idea from \cite{Hassani13}, adapted
for the source coding problem. The block length optimization problem is analyzed in Sect. \ref{op}.
The second main result of this work relates to a joint source-channel coding problem over a broadcast channel. In Sect. \ref{4}
we design a polar-codes-based compression scheme
to construct rate-optimal codes for this problem. 
As a preliminary result, in Section \ref{3} we analyze a generalized universal compression problem in which the decoders have access
to the encoding sequences produced by some subsets of the set of encoders in the system.


\section{Problem Statement and coding scheme}\label{2}
\subsection{Problem Statement and Preliminaries}\label{s1}
In what follows, the index set $\{1,...,n\}$ is abbreviated as $[n]$. For a subset $\cA \subseteq [n]$ 
we denote by $\cA^c$ its complement in $[n]$ and abbreviate $\{X^i\}_{i \in \cA}$ as $X^{\cA}.$ 
If $\cA=\{i,i+1,\dots,j-1,j\}$, we write $X^{i:j}$ instead of $X^\cA$.

Before we state the main theorem of this paper, we first recall some preliminaries on polar codes.
For random variables $(X,Y) \sim P_{X,Y}$ over finite alphabets $\cX \times \cY$, the average error probability of the maximum a posteriori estimator $\hat{x}(y)= \arg \max_{x \in \cX}P_{X|Y}(x|y)$ of $X$ given $Y$ is
$$
P_e(X|Y)=1-\sum_{y \in \cY}P_Y(y) \max_{x \in \cX} P_{X|Y}(x|y).
$$
The conditional entropy $H_q(X|Y)$ is defined as
$$
H_q(X|Y)=- \sum_{x \in \cX. y \in \cY} P_{X,Y}(x,y) \log_q P_{X|Y}(x|y).
$$
Assume that $N=2^m$ for some integer $m$, and define the polarizing transform $G_N=G^{\otimes m},$ 
where $G=\text{\small{$\Big(\hspace*{-.05in}\begin{array}{c@{\hspace*{0.05in}}c}
    1&0\\1&1\end{array}\hspace*{-.05in}\Big)$}}$ 
and $\otimes$ denotes the Kronecker product of matrices. 
Given the vector $(X^{1:N}, Y^{1:N})$ of $N$ independent copies of the random 
 variables $(X,Y)$, define
a random vector $U^{1:N}=X^{1:N}G_N.$ For $\beta \in (0,1/2)$, consider the set
   \begin{equation}\label{eq:le}
\cL_{X|Y}^{(N)}=\{i \in [N]:P_e(U^i|U^{1:(i-1)},Y^{1:N}) \leq 2^{-N^{\beta}}\}
    \end{equation}
consisting of indices such that $U^i$ is approximately a deterministic function of $(U^{1:(i-1)},Y^{1:N})$. When $\cX=\{0,1\}$, Ar{\i}kan \cite{Arikan10} \cite{ari09a} proved that
$$
\lim_{m \to \infty} \frac{1}{N} | \cL_{X|Y}^{(N)} |=1-H_2(X|Y).
$$
In \cite{Mori14}, Mori and Tanaka proved that
\begin{equation}\label{cardinality}
\lim_{m \to \infty} \frac{1}{N} | \cL_{X|Y}^{(N)} |=1-H_q(X|Y).
\end{equation}
for the case when $\cX={\mathbb F}_q$ is a prime field (the channel coding version of this result first appeared
in \cite{sas09a}). Mori and Tanaka also extended this result to the case when $\cX={\mathbb F}_q$ is a finite field
(not necessarily prime) in the following way. Let $\alpha$ be a primitive element of ${\mathbb F}_q$ 
and let $U^{1:N}=X^{1:N} G^{\otimes m},$ where this time
$G=\text{\small{$\Big(\hspace*{-.05in}\begin{array}{c@{\hspace*{0.05in}}c}
    1&0\\\alpha&1\end{array}\hspace*{-.05in}\Big)$}}.$ Again define the set $\cL_{X|Y}^{(N)}$ of low-entropy symbols $U^i$ using 
\eqref{eq:le}. As shown in \cite{Mori14}, the limit relation \eqref{cardinality} still holds true, providing a basis for a construction
of $q$-ary polar codes.

The universal source coding problem can be formulated as follows. Consider a collection  of random variables $(X, Y_1,...,Y_K)$ with joint distribution $P_{X,Y_1,...,Y_K}$ defined on the set $\cX \times \cY_1 \times \cdots \times \cY_K$, where $\cX={\mathbb F}_q$ is a finite field of size $q$ and $\cY_1,...,\cY_K$ are arbitrary finite sets.
We consider $X$ as a memoryless source and  $Y_1,...,Y_K$ as local side information values
about $X$ available to decoders $1,\dots, K,$ respectively. 
Let $(X^{1:n},Y_1^{1:n},...,Y_K^{1:n})$ be $n$ independent copies of $(X,Y_1,...,Y_K)$. 
The encoder and all the decoders have knowledge of the joint distribution $P_{X,Y_1,...,Y_K}$. 
The realization of side information $Y_k^{1:n}$ is only available to decoder $k, k=1,\dots, K.$ 
The encoder aims at conveying $X^{1:n}$ to all the decoders in a lossless way, i.e., with an arbitrarily small probability of error.
\remove{
 way (i.e., with vanishing error probability of decoding).
We say that the rate $R$ is achievable if their exist a sequence of encoders
  \begin{equation}\label{eq:f}
f^{(n)}:\cX^n \to [q^{Rn}]
  \end{equation}
and $K$ sequences of decoders
   \begin{equation}\label{eq:g}
g_k^{(N)}:[q^{Rn}] \times \cY_k^N \to \cX^N, 1\leq k \leq K
  \end{equation}
such that the probability of error
$$
P_k^{(N)}=\Pr[X^N \neq g_k^{(N)}(f^{(N)}(X^N),Y_k^N)]
$$
vanishes as $N \to \infty$ uniformly for $1 \leq k \leq K,$  while 
   $$
   \lim_{N \to \infty}\frac{\log M_N}{N}=R.
   $$
}

\subsection{Informal description: The case of $K=2$} \label{basic}
Before formalizing the solution to the problem, we give a brief informal description. Let us consider the case $K=2$. For a sequence $x^{1:N}$ of length $N=2^m$, let $u^{1:N}=x^{1:N}G_N.$ Define two mappings $f_{{1}}^N$ and $f_{{2}}^N$ as follows:
  $$
  f_{{1}}^N(x^{1:N})=u^{(\cL_{X|Y_1}^{(N)})^c}, \quad f_{{2}}^N(x^{1:N})=u^{(\cL_{X|Y_2}^{(N)})^c}.
  $$  
Given a pair $(f_{i}^N(X^{1:N}),Y_i^{1:N}), i=1,2$, we can recover $X^{1:N}$ with a small error probability for $N$ sufficiently large. 

Consider a source sequence $x^{1:n}$ of length $n$, where $n=tN$ and $N=2^m.$ We define the encoding mapping $f_{[2]}^n:{\cX}^n \to {\cX}^{nR}$ as 
\begin{align*}
f_{[2]}^n(x^{1:n})=\{f_{{1}}^N(x^{1:N}),\{f_{{1}}^N (x^{(iN+1):((i+1)N)})\oplus\\ f_{{2}}^N(x^{((i-1)N+1):(iN)})
\}_{i=1}^{t-1}, f_{{2}}^N(x^{(n-N+1):n})\}, 
\end{align*}
where $\oplus$ denotes coordinate-wise addition over the finite field $\cX={\mathbb F}_q.$ If the two sequences have different length, 
we first pad the shorter sequence with zeros then perform the addition.

The rate of this coding scheme satisfies
$$
    R\le\frac{t+1}{t} (\max(H_q(X|Y_1),H_q(X|Y_2))+o(1))
$$
as $N \to \infty$. We can approach the rate $\max_{k\in\{1,2\}}H_q(X|Y_k)$ by choosing a 
sufficiently large ``chaining'' parameter $t.$ 

The decoding scheme is as follows (Fig.~\ref{myf}).
The first decoder has knowledge of $f_{{1}}^N(x^{1:N})$ and $y_1^{1:N}$. Thus, it can recover the sequences $u^{1:N}$ and $x^{1:N}$ with a small error probability. 
As a result, it can calculate $f_{{2}}^N(x^{1:N})$. 
Since $f_{[2]}^n(x^{1:n})$ contains $f_{{1}}^N(x^{(N+1):(2N)}) \oplus f_{{2}}^N (x^{1:N}))$, 
the first decoder now has knowledge of $f_{{1}}^N(x^{(N+1):(2N)})$. Together with the side information $y_1^{(N+1):(2N)},$ it can losslessly decode $x^{(N+1):(2N)}$ and calculate the $f_{{2}}^N(x^{(N+1):(2N)})$. 
The first decoder 
then iterates this procedure: After decoding $x^{(iN-N+1):(iN)}$, it calculates $f_{{2}}^N(x^{(iN-N+1):(iN)}).$
Using $f_{{2}}^N(x^{(iN-N+1):(iN)}) \oplus f_{{1}}^N (x^{(iN+1):(iN+N)}),$ it finds $f_{{1}}^N (x^{(iN+1):(iN+N)})$ and then decodes $x^{(iN+1):(iN+N)}$. The whole sequence $x^{1:n}$ can be thus recovered.

\begin{SCfigure*}
\centering

\begin{tikzpicture}[auto]
\draw
	node at (0,0) [block] (f1) {$f_1^N(x[1])$}
      node at (2.8,0)[block] (f2) {$f_1^N(x[2]) \oplus f_2^N(x[1])$}
      node at (0,3) [block](x1){$x[1]$}
      node at (1.2,1.5) [block](i1){$f_2^N(x[1])$}
      node at (2.8,1.5) [round](sum1){\suma}
      node at (4.5,1.5) [block](i2){$f_1^N(x[2])$}
      node at (4.5,3) [block](x2){$x[2]$};

	\draw[->](f1) -- node {$y_1[1]$}(x1);
 	\draw[->](i2) -- node {$y_1[2]$} (x2);
	\draw[->](i1) -- node {} (sum1);
	\draw[->](f2) -- node {} (sum1);
	\draw[->](sum1) -- node {} (i2);
	\draw[->](x1) -| node{} (i1);

      \draw node at (5.9, 2) {\Huge \dots};
      \draw node at (5.9, 0) {\Huge \dots};
	
\draw
	node at (9.3,0) [block] (f3) {$f_1^N(x[t]) \oplus f_2^N(x[t-1])$}
      node at (12.5,0) [block] (f4) {$f_2^N(x[t])$}
      node at (6.5,3) (v1) {}
      node at (7.4,1.5) [block] (i3) {$ f_2^N(x[t-1])$}
      node at (10.8,1.5) [block] (i4) {$f_1^N(x[t])$}
      node at (9.3,1.5) [round](sum2){\suma}
      node at (10.8,3) [block](x3){$x[t]$};

	\draw[->](i4) -- node {$y_1[t]$}(x3);
	\draw[->](i3) -- node {} (sum2);
	\draw[->](f3) -- node {} (sum2);
	\draw[->](sum2) -- node {} (i4);
      \draw[->](v1) -| node{} (i3);

\draw
	node at (12.5,-3) [block] (x4) {$x[t]$}
      node at (7.4,-1.5) [block] (i5) {$ f_2^N(x[t-1])$}
      node at (10.8,-1.5) [block] (i6) {$f_1^N(x[t])$}
      node at (9.3,-1.5) [round](sum3){\suma}
      node at (7.4,-3) [block] (x5) {$ x[t-1]$};

	\draw[->](f4) -- node {$y_2[t]$}(x4);
	\draw[->](i5) -- node {$y_2[t-1]$}(x5);
	\draw[->](i6) -- node {} (sum3);
	\draw[->](f3) -- node {} (sum3);
	\draw[->](sum3) -- node {} (i5);
      \draw[->](x4) -| node{} (i6);

      \draw node at (5.9, -2) {\Huge \dots};

\draw
      node at (1.2,-1.5) [block](i7){$f_2^N(x[1])$}
      node at (2.8,-1.5) [round](sum4){\suma}
      node at (4.5,-1.5) [block](i8){$f_1^N(x[2])$}
      node at (5.3,-3) (v2) {}
      node at (1.2,-3) [block](x6){$x[1]$};

	\draw[->](i7) -- node {$y_2[1]$}(x6);
	\draw[->](i8) -- node {} (sum4);
	\draw[->](f2) -- node {} (sum4);
	\draw[->](sum4) -- node {} (i7);
      \draw[->](v2) -| node{} (i8);

	\draw [color=green,thick](-0.9,-0.5) rectangle (13.4,0.5);
	\draw [color=red,thick](-0.9,0.7) rectangle (11.7,3.4);
	\node at (12.7,2.5) {\textcolor{red}{\textsc{Decoder 1}}};
	\draw [color=blue,thick](0.3,-3.4) rectangle (13.4,-0.7);
	\node at (-0.63,-2.5)  {\textcolor{blue}{\textsc{Decoder 2}}};

\end{tikzpicture}

\caption{The decoding procedure. For notational simplicity, we use $x[i], y_1[i], y_2[i]$ to denote $x^{(iN-N+1):(iN)}$, $y_1^{(iN-N+1):(iN)}$, $y_2^{(iN-N+1):(iN)}$ respectively for all $i=1,2,\dots,t.$ }
\label{myf}

\end{SCfigure*}

The second decoder follows a similar procedure in reverse order. With the knowledge of $f_{{2}}^N(x^{(n-N+1):n})$ 
and $y_2^{(n-N+1):n}$, it can recover $u^{(n-N+1):n}$ and $x^{(n-N+1):n}$. 
Then it calculates $f_{{1}}^N(x^{(n-N+1):n})$. Since $f_{[2]}^n(x^{1:n})$ contains 
$f_{{2}}^N(x^{(n-2N+1):(n-N)}) \oplus f_{{1}}^N(x^{(n-N+1):n})$, the second decoder now has knowledge of $f_{{2}}^N(x^{(n-2N+1):(n-N)})$. 
Together with $y_2^{(n-2N+1):(n-N)}$, it can losslessly decode
$x^{(n-2N+1):(n-N)}$ and recover $f_{{1}}^N(x^{(n-2N+1):(n-N)})$. By repeating this procedure the second 
decoder can also recover the entire source sequence. 

\subsection{Formal description of the coding scheme}
Below we use the following notation. In a universal compression scheme, the code is constructed for a set of decoders, such as 
$[K]=\{1,\dots,K\}.$ All the encoding and decoding maps as well as the values of the rate and the error probability
have the same subscript, such as $[K].$  If the compression scheme only contains a single decoder, such as $J+1$, then we write $\{J+1\}$ as 
the subscript. If we need to refer to the block length explicitly, we use a superscript such as $n.$
Of course, any universal scheme uses a single encoder map for all the $K$ decoders. We refer to decoder $k, k\in[K]$ by
introducing
a second superscript, and so the complete notation for a decoder is of the form $g_{[K]}^{n,k}.$ We use similar notation
for the encoder, the values of the rate and of the error probability of decoding.

  \begin{theorem}\label{mth}
For any $\epsilon >0,\delta >0$, 
there are integers $t$ and $m_0$ such that for any $m \geq m_0$ there exists an encoder
  $$
f_{[K]}^{n}:\cX^n \to \cX^{nR_{{[K]}}}, \quad n=t2^m
  $$
and $K$ decoders
$$
g_{[K]}^{n,k}:\cX^{nR_{{[K]}}} \times \cY_k^n \to \cX^n, 1\leq k \leq K
$$
such that the rate satisfies 
$$
0<R_{[K]}- \max_{1 \leq k \leq K}H_q(X|Y_k)< \delta$$
and the 
probability of error satisfies
$$
P_{{[K]}}^{n,k}\triangleq \Pr[X^{1:n} \neq g_{[K]}^{n,k}(f_{{[K]}}^{n}(X^{1:n}),Y_k^{1:n})] < \epsilon
$$
for all $1 \leq k \leq K$.
  \end{theorem}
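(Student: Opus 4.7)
The plan is to generalize the $K=2$ chaining scheme of Section~\ref{basic} to arbitrary $K$, while retaining the per-block encoders $f_k^N(x^{1:N})=u^{(\cL_{X|Y_k}^{(N)})^c}$ built from the high-entropy sets of \eqref{eq:le}. For each $k$, the Mori--Tanaka result \eqref{cardinality} gives
\[
\tfrac{1}{N}\big|(\cL_{X|Y_k}^{(N)})^c\big|\;\longrightarrow\;H_q(X|Y_k),
\]
and the SC-type source decoder driven by $(u^{(\cL_{X|Y_k}^{(N)})^c},y_k^{1:N})$ recovers $x^{1:N}$ with error at most $N\cdot 2^{-N^{\beta}}$. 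These are the per-block atoms; the rest of the proof is a combinatorial design plus two routine unions bounds.

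The construction I would use splits $x^{1:n}$ into $t$ blocks $x[1],\dots,x[t]$ of length $N=2^m$, sets $H^{\ast}=\max_{k}H_q(X|Y_k)$, and pads each $f_k^N(\cdot)$ with zeros in $\cX={\mathbb F}_q$ so that all $K$ encodings have a common length $\ell_N=\lceil N(H^{\ast}+\eta)\rceil$ for a small $\eta>0$. The transmission comprises two parts. First, $K{-}1$ \emph{seed} blocks of length $\ell_N$ that deliver, uncombined, the padded $f_k^N$ of a designated anchor block for each $k\in[K]$. Second, for $i=1,\dots,t$ a single \emph{chained} transmission $T_i$ of length $\ell_N$, formed as a sum over $\cX$ of padded $f_{k}^N$'s sitting at adjacent blocks; these are chosen so that for every decoder $k$ there is an ordering $\pi_k$ of $[t]$ along which, after decoding $x[\pi_k(j)]$ and computing $f_k^N(x[\pi_k(j)])$, a single $T_i$ reveals the padded $f_k^N(x[\pi_k(j{+}1)])$. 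The $K=2$ picture in Fig.~\ref{myf} is the base case of this pattern, with decoder~$1$ propagating left-to-right and decoder~$2$ right-to-left; for $K>2$ the seed blocks stagger the unlocking of the remaining $K{-}2$ chains while they share the same chained symbols $T_i$.

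For the rate, the total number of $\cX$-symbols transmitted is $(t+K-1)\ell_N = (t+K-1)(N H^{\ast}+o(N))$, so that
\[
R_{[K]} \;\le\; \frac{t+K-1}{t}\,H^{\ast} + o(1)
\;=\; H^{\ast} + \frac{(K-1)H^{\ast}}{t} + o(1).
\]
Given $\delta>0$, first fix $t$ so that $(K-1)H^{\ast}/t < \delta/2$, then choose $m_0$ large enough to swallow the $o(1)$; positivity of $R_{[K]}-H^{\ast}$ follows because each decoder must at the very least receive its own seed. For the error, decoder~$k$ performs $t$ polarized source decodings along its chain, each failing with probability at most $N\cdot 2^{-N^{\beta}}$, whence a union bound over the $K$ decoders and the $t$ chain steps gives
\[
\max_{k}P_{[K]}^{n,k} \;\le\; K t N\,2^{-N^{\beta}},
\]
which drops below $\epsilon$ for $m\ge m_0$ once $t$ is frozen.

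The main obstacle I anticipate is the combinatorial design of the $T_i$'s: one transmitted block of length $\ell_N\approx NH^{\ast}$ must simultaneously advance every decoder's chain by one step, without any decoder needing information it has not yet decoded. For $K=2$ this is trivially solved by XORing the two neighboring $f$-values, but for $K\ge 3$ the naive sum of $K$ padded encodings forces each decoder to know $K{-}1$ other $f_j^N$'s at the relevant step. My resolution is to stagger the $K-1$ seeds so that each additional chain is ``unlocked'' only after its companion seed has been consumed, reducing the simultaneous unknowns per $T_i$ to one per decoder; since $K$ is fixed and $t\to\infty$, the $O(K)$ seed overhead contributes only an $o(1)$ term to the rate, and the error analysis above is unaffected.
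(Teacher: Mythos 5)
Your per-block atoms (the sets $\cL_{X|Y_k}^{(N)}$, the rate limit \eqref{cardinality}, the $2^{-N^{\beta}}$ error bound) and the rate/error bookkeeping are fine, and for $K=2$ your scheme coincides with the paper's (Section \ref{basic}). The gap is exactly the step you flag as ``the main obstacle'' and then wave off: for $K\ge 3$ you never construct the chained blocks $T_i$, and the difficulty is not removed by staggering seeds. If $T_i$ is a single linear combination over $\cX$ containing the term $f_k^N(x[b_k^i])$ for every $k$, then decoder $k$ can strip off the other summands only if it has already decoded every block $b_{k'}^i$, $k'\neq k$; hence for each $i$ the block $b_k^i$ must be the \emph{last} of $\{b_1^i,\dots,b_K^i\}$ in decoder $k$'s own decoding order, simultaneously for all $k$. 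Seeds only supply the $O(K)$ anchors at the ends of the chains; in the bulk, each decoder must still receive essentially all $t$ blocks through the $T_i$'s, so almost every $T_i$ must satisfy all $K(K-1)$ precedence constraints at once. Already for $K=3$ this is a genuine combinatorial design problem: with $\pi_1$ forward and $\pi_2$ backward one is forced to $b_2^i<b_3^i<b_1^i$, and the precedence constraints $b_1^i\prec_{\pi_3} b_3^i$, $b_2^i\prec_{\pi_3}b_3^i$ must form an acyclic digraph, which fails for the natural assignments (e.g. $b_3^i=b_1^i-1$, $b_2^i=b_1^i-2$ yields the cycle $c\prec_{\pi_3}c-1\prec_{\pi_3}c$). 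You give no construction and no existence argument. Note also that if your flat scheme worked, the total blocklength would be $O(tN)$, strictly better than the paper's own construction, whose length is computed in Section \ref{op} as $N\prod_i t_i$, of order $N\big((K-1)H_0/\Delta\big)^{K-1}$; you are therefore implicitly claiming a stronger result and owe the full design.

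The paper sidesteps all of this by induction on $K$, which is the idea your proposal is missing. Having built the pair $\big(f_{[J]}^{n_1},\{g_{[J]}^{n_1,j}\}\big)$ for the first $J$ decoders, it treats that entire scheme as a single ``virtual decoder'' and chains it with the single-decoder scheme for decoder $J+1$ using precisely the two-party forward/backward pattern of Eq.~\eqref{eq:e1}: the first $J$ decoders sweep the $t_2$ super-blocks left to right, decoder $J+1$ sweeps right to left, and every combined symbol involves only \emph{two} encodings, so the $K=2$ argument applies verbatim at each level. The costs are a multiplicative blocklength $n=N\prod_{i=1}^{K-1}t_i$ and a per-level error budget $\epsilon/t_2$, both tracked explicitly in the paper. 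To repair your proof, either adopt this recursion or actually exhibit, and verify, a $K$-ary chaining design satisfying the precedence conditions above.
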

  \begin{proof}
We prove the theorem by induction. We begin with a slightly 
stronger claim for $K=1$: For any $\epsilon >0, \delta >0,$ and any integer $t$, 
there exists an integer $m_0$ such that for any $m \geq m_0$, we can find an encoder and a decoder that satisfy the conditions
of the theorem. We first consider the case when $\cX={\mathbb F}_q$ is a prime field. 
As before, let $N=2^m$ and $n=tN.$
Given a realization of the source and side information $(x^{1:n},y_1^{1:n})$, 
define the sequence 
   \begin{equation}\label{eq:diag}
   u^{1:n}=x^{1:n}\text{diag}(G_N,G_N,\dots,G_N)
   \end{equation}
where the diagonal matrix is formed of $t$ identical blocks $G_N.$
In each block of length $N$ there is a set of low-entropy symbols as defined in \eqref{eq:le}. Denote the union of these
sets by $\cL_{X|Y_1}^{(n)}$:  
\begin{align*}
\cL_{X|Y_1}^{(n)}=\big\{i \in [n]:(i-(\lceil{i}/{N} \rceil -1)N) \in \cL_{X|Y_1}^{(N)}\big\}.
\end{align*}
As an immediate consequence of \eqref{cardinality} we have that
\begin{equation}\label{cardinality2}
\lim_{m \to \infty} \frac{1}{n} | \cL_{X|Y_1}^{(n)}|=1-H_q(X|Y_1).
\end{equation}
The encoder $f_{\{1\}}^{n}$ is defined as 
   $$
   f_{\{1\}}^{(n)}: x^{1:n} \mapsto \big\{ u^j : j\in\big(\cL_{X|Y_1}^{(n)}\big)^c \big\}.
   $$
The decoder only needs to determine the values $u_j,j\in{\cL_{X|Y_1}^{(n)}}$ since $x^{1:n}$ and $u^{1:n}$ are in one-to-one correspondence. For all $i \in \cL_{X|Y_1}^{(n)}$, the decoder generates its decision as 
    \begin{multline*}
\hat{u}^i=\arg\max_{u \in \cX} \Pr\big(u|u^{(\lceil i/N \rceil N-N +1):(i-1)},\\ y_1^{(\lceil i/N
 \rceil N-N +1):\lceil i/N \rceil N}\big),
   \end{multline*}
where the probability is computed with respect to the random variable $U_i$ conditional on 
$U^{(\lceil i/N \rceil N-N +1):(i-1)}, Y_1^{(\lceil i/N \rceil N-N +1):(\lceil i/N \rceil N)}.$

Now we invoke the results on the error probability of decoding for a ``single'' polar block \cite{ari09a} and
use the union bound to extend it to $t$ such blocks. We conclude that 
the probability of error $P_{\{1\}}^{n,1}<t2^{-N^{\beta}}, 0<\beta<1/2$ which is less that $\epsilon$ for $m$ sufficiently large. 
Together with \eqref{cardinality2}, this implies our claim our claim for $K=1$ and prime $q.$ 
When $q$ is a prime power, we can follow the above arguments upon replacing $G_N$ in \eqref{eq:diag}
with $\text{\small{$\Big(\hspace*{-.05in}\begin{array}{c@{\hspace*{0.05in}}c}
    1&0\\\alpha&1\end{array}\hspace*{-.05in}\Big)$}}^{\otimes m},$
where $\alpha$ is a primitive element of ${\mathbb F}_q.$

\remove{

$\cX={\mathbb F}_q$ is a non-prime field, we only need to change the definition of $u^{1:n}$ to $u^{iN+1:iN+N}=x^{iN+1:iN+N} \text{\small{$\Big(\hspace*{-.05in}\begin{array}{c@{\hspace*{0.05in}}c}
    1&0\\\alpha&1\end{array}\hspace*{-.05in}\Big)$}}^{\otimes m}$ respectively for $0 \leq i \leq t-1$, where  $\alpha$ is a primitive element of ${\mathbb F}_q.$}

This establishes the induction base. Now suppose that the claim of the theorem holds for $K=J$. By the
induction hypothesis, for any $\epsilon >0$ and $\delta >0$, there are an integer $t_1$ and 
a corresponding $m_1$ such that for any $m \geq m_1$ there is an encoder
   $$
f_{[J]}^{n_1}:\cX^{n_1} \to \cX^{n_1R_{{[J]}}}
   $$
and $J$ decoders
   $$
g_{[J]}^{n_1,j}:\cX^{n_1R_{{[J]}}} \times \cY_j^{n_1} \to \cX^{n_1}, 1\leq j \leq J
   $$
such that the block length $n_1=t_1 2^m$, the compression rate satisfies
\begin{equation}\label{rj}
R_{{[J]}}< \max_{1 \leq j \leq J}H_q(X|Y_j)+\delta /2
\end{equation}
and the probability of error
\begin{equation}\label{ej}
\begin{aligned}
P_{[J]}^{n_1,j}&=\Pr[X^{1:n_1} \neq g_{[J]}^{n_1,j}(f_{{[J]}}^{n_1}(X^{1:n_1}),Y_j^{1:n_1})]\\
 &< \epsilon /t_2
\end{aligned}
\end{equation}
for all $1 \leq j \leq J$, where $t_2=\lceil \frac{2}{\delta} \rceil +1$. Moreover, there is an 
$m_2$ such that for any integer $m \geq m_2$ we can find an encoder
$$
f_{\{J+1\}}^{n_1}:\cX^{n_1} \to \cX^{n_1R_{\{J+1\}}}
$$
and a decoder
$$
g_{\{J+1\}}^{n_1,J+1}:\cX^{n_1R_{\{J+1\}}} \times \cY_{J+1}^{n_1} \to \cX^{n_1}
$$
such that $n_1=t_1 2^m$, the rate satisfies 
\begin{equation}\label{rj1}
R_{\{J+1\}}< H_q(X|Y_{J+1})+\delta /2
\end{equation}
and the probability of decoding error satisfies
\begin{equation}\label{ej1}
\begin{aligned}
P_{\{J+1\}}^{n_1,J+1}&=\Pr[X^{1:n_1} \neq g_{\{J+1\}}^{n_1,J+1}(f_{\{J+1\}}^{n_1}(X^{1:n_1}),Y_{J+1}^{1:n_1})] \\
& < \epsilon /t_2.
\end{aligned}
\end{equation}

Now we prove the claim for $K=J+1$. Let $t=t_1t_2, m_0=\max(m_1,m_2)$. For any integer $m \geq m_0$, let $n_1=t_1 2^m, n=t_1t_2 2^m$. We can find encoders $f_{{[J]}}^{n_1}, f_{\{J+1\}}^{n_1}$ and decoders $g_{{[J]}}^{n_1,j}$, $1 \leq j \leq J$, $g_{\{J+1\}}^{n_1,J+1}$ satisfying \eqref{rj}-\eqref{ej1}. We define the encoder $f_{{[J+1]}}^n:\cX^{n} \to \cX^{nR_{{[J+1]}}}$ by Eq.\eqref{eq:e1} at the top of the next page,
where $\oplus$ denotes coordinate-wise addition over ${\mathbb F}_q.$ 
\begin{figure*}[!t]
\begin{equation}\label{eq:e1}
f_{{[J+1]}}^{n}(x^{1:n})=\{f_{{[J]}}^{n_1}(x^{1:n_1}), \{ f_{{[J]}}^{n_1}(x^{(in_1+1):((i+1)n_1)}) \oplus f_{\{J+1\}}^{n_1}(x^{((i-1)n_1+1):(in_1)}) 
 \}_{i=1}^{t_2-1},  f_{\{J+1\}}^{n_1}(x^{(n-n_1+1):n}) \}
\end{equation}
\begin{gather}
\hat{x}_j^{1:n_1}=g_{{[J]}}^{n_1,j}(f_{{[J]}}^{n_1}(x^{1:n_1}),y_j^{1:n_1}), \quad j=1,\dots,J \label{eq:e1-1}
\end{gather}
\begin{multline}
\hat{x}_j^{(in_1+1):(in_1+n_1)}=g_{{[J]}}^{n_1,j} \Big(f_{{[J]}}^{n_1}(x^{(in_1+1):((i+1)n_1)} \oplus f_{\{J+1\}}^{n_1}(x^{((i-1)n_1+1):(in_1)}) 
  )
 \ominus  f_{\{J+1\}}^{n_1}(\hat{x}_j^{((i-1)n_1+1):(in_1)}),\\
 y_j^{(in_1+1):((i+1)n_1)} \Big), \quad \quad i=1,\dots,t_2-1,  j=1,\dots,J \label{eq:e1-2}
 \end{multline}
\begin{gather}
\hat{x}_{J+1}^{(n-n_1+1):n}=g_{\{J+1\}}^{n_1,J+1} \Big( f_{\{J+1\}}^{n_1}(x^{(n-n_1+1):n})  , y_{J+1}^{(n-n_1+1):n}\Big) \label{eq:e2-1}
\end{gather}
\begin{multline}
\hat{x}_{J+1}^{(in_1-n_1+1):(in_1)}=g_{\{J+1\}}^{n_1,J+1} \Big( f_{\{J+1\}}^{n_1}(x^{((i-1)n_1+1):(in_1)})  
 \oplus  f_{{[J]}}^{n_1}(x^{(in_1+1):((i+1)n_1)}) \ominus  f_{{[J]}}^{n_1}(\hat{x}_{J+1}^{(in_1+1):((i+1)n_1)}),\\
 y_{J+1}^{((i-1)n_1+1):(in_1)}\Big), \quad \quad i=t_2-1,\dots,1 \label{eq:e2-2}
\end{multline}
\vspace*{2pt}\hrulefill
\end{figure*}
 Using \eqref{rj}, it is easy to see  that
\begin{align*}
  R_{{[J+1]}} &\leq \frac{(t_2+1)}{t_2} \max(R_{{[J]}},R_{\{J+1\}})\\
&\leq \max(R_{{[J]}},R_{\{J+1\}})+\frac{1}{t_2} \\
&< \max_{1 \leq j \leq J+1}H_q(X|Y_j)+\delta.
\end{align*}
Thus the rate constraint is satisfied. Note that $x^{1:n}$ consists of $t_2$ blocks of length $n_1$. The first $J$ decoders decode successively from block $1$ to block $t_2$ while decoder $J+1$ decodes in reverse order. We define $\hat{x}_j^{1:n}$ as a function of $x^{1:n}$ and $y_j^{1:n}$ successively as shown in Eqns.~\eqref{eq:e1-1},~\eqref{eq:e1-2} at the top of the
next page.
\remove{
$$
\hat{x}_j^{1:n_1}=g_{{[J]}}^{n_1,j}(f_{{[J]}}^{n_1}(x^{1:n_1}),y_j^{1:n_1}) 
$$
   \begin{multline*}
\hat{x}_j^{(in_1+1):(in_1+n_1)}\\=g_{{[J]}}^{n_1,j} \Big( f_{\{J+1\}}^{n_1}(x^{(in_1-n_1+1):(in_1)}) 
  \oplus f_{{[J]}}^{n_1}(x^{(in_1+1):(in_1+n_1)})\\
 \ominus  f_{\{J+1\}}^{n_1}(\hat{x}_j^{(in_1-n_1+1):(in_1)}),
 y_j^{(in_1+1):(in_1+n_1)} \Big)
   \end{multline*}
for $i=1,\dots,t_2-1, 1 \leq j \leq J$, where $\ominus$ denotes coordinate-wise subtraction over the finite field $\cX={\mathbb F}_q$. 
}
Decoders $g_{{[J+1]}}^{n,j}$, $1 \leq j \leq J$ are defined as
$$
g_{{[J+1]}}^{n,j}(f_{{[J+1]}}^{n}(x^{1:n}),y_j^{1:n})=\hat{x}_j^{1:n}.
$$
Let $\hat{X}_j^{1:n}=g_{{[J+1]}}^{n,j}(f_{{[J+1]}}^{n}(X^{1:n}),Y_j^{1:n})$. The error probability
\begin{align*}
P_{{[J+1]}}^{n,j}&=\Pr[X^{1:n} \neq \hat{X}_j^{1:n}]
=\Pr[X^{1:n_1} \neq \hat{X}_j^{1:n_1}]\\ 
&\hspace*{-.4in}
\text{\small $+\sum_{i=1}^{t_2-1} \Pr[X^{(in_1+1):(in_1+n_1)} \neq \hat{X}_j^{(in_1+1):(in_1+n_1)},X^{1:in_1}=\hat{X}_j^{1:in_1}]$} \\
&\leq \Pr[X^{1:n_1} \neq \hat{X}_j^{1:n_1}]\\
&\hspace*{-.4in}
\text{\small $+\sum_{i=1}^{t_2-1} \Pr[X^{(in_1+1):(in_1+n_1)} \neq \hat{X}_j^{(in_1+1):(in_1+n_1)}|X^{1:in_1}=\hat{X}_j^{1:in_1}].$}
\end{align*}
It is easy to see that each term in the right hand side of the inequality is equal to $P_{{[J]}}^{n_1,j}$. By \eqref{ej} we conclude that $P_{{[J+1]}}^{n,j} < \epsilon$ for $1 \leq j \leq J$. As for decoder $J+1$, define $\hat{x}_{J+1}^{1:n}$ as a function of $x^{1:n}$ and $y_{J+1}^{1:n}$ successively as shown in Eqns.~\eqref{eq:e2-1},~\eqref{eq:e2-2} at the top of the next page.
\remove{
$$
\hat{x}_{J+1}^{(n-n_1+1):n}=g_{\{J+1\}}^{n_1,J+1}( f_{\{J+1\}}^{n_1}(x^{(n-n_1+1):n}),y_{J+1}^{(n-n_1+1):n}) 
$$
\begin{align*}
&\hat{x}_{J+1}^{(in_1-n_1+1):(in_1)}=g_{\{J+1\}}^{n_1,J+1} \left( f_{\{J+1\}}^{n_1}(x^{(in_1-n_1+1):(in_1)})  \right. \\
& \oplus  f_{{[J]}}^{n_1}(x^{(in_1+1):(in_1+n_1)}) \ominus  f_{{[J]}}^{n_1}(\hat{x}_{J+1}^{(in_1+1):(in_1+n_1)}),\\
&\left. y_{J+1}^{(in_1-n_1+1):(in_1)}\right)
\end{align*}}
Decoder $g_{{[J+1]}}^{n,J+1}$ is defined as
$$
g_{{[J+1]}}^{n,J+1}(f_{{[J+1]}}^{n}(x^{1:n}),y_{J+1}^{1:n})=\hat{x}_{J+1}^{1:n}.
$$
The error probability $P_{{[J+1]}}^{n,J+1}$ can be bounded in exactly the same way as above. Thus we conclude that $P_{{[J+1]}}^{n,j} < \epsilon$ for $1 \leq j \leq J+1$ and complete the proof.
\end{proof}

Note that our proof is constructive. The coding scheme proposed above inherits the low encoding and decoding complexity of polar codes and achieves the rate $\max_{1 \leq k \leq K} H_q(X|Y_k)$.
By the Slepian-Wolf theorem \cite{Slepian73} this is the smallest achievable
rate, so the scheme achieves the optimal encoding rate of universal source coding with side information. Furthermore, in this proof polar codes are used only for establishing the induction base. They can be replaced by any source code whose rate achieves the conditional entropy function.


\subsection{Universal Compression without Side Information}
The compression scheme described above can be carried over without changes 
to solve the classical universal source coding problem. 
Let $X$ be a source over the alphabet $\cX={\mathbb F}_q$. Consider a finite set of distributions 
$\{P_1, P_2,..., P_K\}$ on $\cX$. Suppose that the encoder only knows that the distribution of 
$X$ belongs to this set, while the decoder has knowledge of the actual source distribution. We 
are seeking a lossless compression scheme achieving the ``compound rate'' 
$\max_{1 \leq k \leq K}H_q(P_k)$, where $H_q(P_k)$ denotes the entropy of distribution $P_k$. 
Without loss of generality we can assume that $H_q(P_1)=\max_{1 \leq k \leq K}H_q(P_k)$. 
When $\cX=\{0,1\}$, Abbe \cite{Abbe10} proved that a polar code constructed for the distribution 
$P_1$ will compress losslessly for all $P_k, 1 \leq k \leq K$. 
This fact relies on special properties of the binary source alphabet.
It is further shown in \cite{Abbe10} that polar codes do not achieve
the compound rate for source coding when the size of the source alphabet is larger than $2$.
In contrast, our scheme, although requiring a much larger block length, is able to achieve the compound rate for any finite set of distributions whenever $\cX$ is a finite field. Formally, we have the following proposition, whose proof follows the same steps as the proof of Theorem \ref{mth} and is therefore omitted here.

\begin{proposition}
For any $\epsilon >0$ and $\delta >0$, there are integers $t$ and $m_0$ such that for any $m \geq m_0,$ we can find a length $n=t 2^m$ block encoder
$$
f_{{[K]}}^{n}:\cX^n \to \cX^{nR_{{[K]}}}
$$
and $K$ decoders
$$
g_{{[K]}}^{n,k}:\cX^{nR_{{[K]}}} \to \cX^n, 1\leq k \leq K
$$
such that the rate satisfies $R_{{[K]}}< \max_{1 \leq k \leq K}H_q(P_k)+\delta$ and the probability of error satisfies
  $$
P_{{[K]}}^{n,k}\triangleq\Pr[X^{1:n} \neq g_{{[K]}}^{n,k}(f_{{[K]}}^{n}(X^{1:n}))] < \epsilon
  $$
for all $1 \leq k \leq K$.
\end{proposition}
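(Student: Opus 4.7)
The plan is to prove the proposition by induction on $K$, following the same chaining template used in Theorem \ref{mth}, with the role previously played by the side information $Y_k$ now played by the publicly known source distribution $P_k$. Since the encoder produces a single output that must serve every decoder, and decoder $k$ is allowed to use knowledge of $P_k$, the two problems are structurally identical: in both cases, decoder $k$ possesses some extra information (side information, or knowledge of the marginal distribution) that allows it to recover the source efficiently. The entire scaffolding---encoder in \eqref{eq:e1}, forward/backward decoding formulas \eqref{eq:e1-1}--\eqref{eq:e2-2}, telescoping bound on the error probability, and $(t_2+1)/t_2$ rate accounting---will carry over verbatim once the base case is re-established in the setting without side information.

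For the base case $K=1$, I would prove the same strengthened statement used inside Theorem \ref{mth}: for any fixed chaining parameter $t$ and any $\epsilon,\delta>0$, a polar source code of length $n=t2^m$ achieves rate below $H_q(P_1)+\delta$ with error probability below $\epsilon$, provided $m$ is large enough. Applying the polar transform block-by-block across $t$ blocks of length $N=2^m$ as in \eqref{eq:diag}, I define the low-entropy set
$$
\cL_X^{(N)}=\{i\in[N]:P_e(U^i\mid U^{1:(i-1)})\le 2^{-N^{\beta}}\},
$$
which is the $Y=\text{const}$ specialization of \eqref{eq:le}. The relation \eqref{cardinality} then reads $|\cL_X^{(N)}|/N\to 1-H_q(P_1)$. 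The encoder transmits $\{u^j:j\in(\cL_X^{(n)})^c\}$ and the decoder, knowing $P_1$, runs successive MAP decoding within each block. A union bound over the $t$ blocks bounds the error probability by $t\cdot 2^{-N^{\beta}}$, which is below $\epsilon$ for $m$ large; the case of non-prime $q$ is handled by replacing the kernel $G$ with $\bigl(\begin{smallmatrix}1&0\\\alpha&1\end{smallmatrix}\bigr)$, as in Theorem \ref{mth}.

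For the inductive step from $J$ to $J+1$, I would reuse the construction \eqref{eq:e1} unchanged: set $t=t_1t_2$ with $t_2=\lceil 2/\delta\rceil+1$, invoke the inductive hypothesis to obtain encoders $f_{[J]}^{n_1}$ and $f_{\{J+1\}}^{n_1}$ with per-decoder error below $\epsilon/t_2$, and combine them via the XOR chain, padding the shorter output with zeros before addition as in Section \ref{basic}. Decoders $1,\dots,J$ decode forward as in \eqref{eq:e1-1}--\eqref{eq:e1-2} and decoder $J+1$ decodes backward as in \eqref{eq:e2-1}--\eqref{eq:e2-2}, with the side-information arguments $y_j^{\bullet}$ simply deleted. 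The crucial observation that keeps these decoders well-defined is that the full family $\{P_1,\dots,P_{J+1}\}$ is public information, so any decoder can evaluate both encoder maps $f_{[J]}^{n_1}$ and $f_{\{J+1\}}^{n_1}$ on any candidate input block. The rate bound $R_{[J+1]}\le \max(R_{[J]},R_{\{J+1\}})+1/t_2<\max_{k\le J+1}H_q(P_k)+\delta$ follows exactly as in Theorem \ref{mth}, and the telescoping bound on the error probability yields $P_{[J+1]}^{n,j}\le t_2\cdot(\epsilon/t_2)=\epsilon$ for every $j\in[J+1]$.

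The only place where the argument really needs checking rather than literal copying is the base case: one must confirm that source polarization \emph{without} side information over $\mathbb{F}_q$ still yields $|\cL_X^{(N)}|/N\to 1-H_q(P_1)$. This is not a genuine obstacle---it is the $Y\equiv\text{const}$ specialization of \eqref{cardinality}---but it is the only deviation from the side-information setting. Since the chaining construction, the forward/backward decoding, the rate accounting, and the union-bound argument on error probabilities all transfer mechanically, omitting the detailed proof in the text is justified.
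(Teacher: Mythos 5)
Your proposal follows essentially the same route as the paper, which explicitly omits this proof on the grounds that it repeats the steps of Theorem~\ref{mth}: induction on $K$ with the chaining construction \eqref{eq:e1}, forward/backward decoding, the $(t_2+1)/t_2$ rate accounting, and a base case given by single-block source polarization, here instantiated with the trivial side information $Y\equiv\text{const}$ so that \eqref{cardinality} yields $|\cL_X^{(N)}|/N\to 1-H_q(P_k)$. You also correctly identify the one point worth making explicit, namely that each decoder can evaluate every candidate encoder map $f^{n_1}_{\{k\}}$ because the whole family $\{P_1,\dots,P_K\}$ is public, so the chaining step remains well defined.
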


\section{Length Optimization} \label{op}
We note that the chaining can be done in many ways, leading to codes of different
length. Here we point out how this property can be used to make the block 
length smaller.

Let us consider a special case where $H(X|Y_k)=H_0$ for all $1 \leq k \leq K$. From the analysis above, when we add decoder $i+1$, we need to chain $t_i$ blocks. Doing so 
increases the rate by approximately $H_0/t_i$. Suppose that the rate of each 
underlying polar encoding is $R_1=\max_{1\le i\le K}\frac{|{(\cL_{X|Y_i}^{(N)})}^c|}{N},$ 
while the target rate is $R_2>R_1.$ Letting $\Delta=R_2-R_1$, we must ensure that
\begin{equation}\label{rate}
\sum_{i=1}^{K-1} \frac{H_0}{t_i} \leq \Delta.
\end{equation}
The total block length is $Nt_1 \cdot\cdot\cdot t_{K-1}$, where $N$ is the length of the underlying polar block. By the arithmetic mean-geometric mean inequality we
obtain
\begin{equation}\label{length}
N\prod_{i=1}^{K-1}t_i \geq N\left( \frac{(K-1)H_0}{\Delta} \right)^{K-1}.
\end{equation}
The total block length is minimized by setting $t_i=(K-1)H_0 / \Delta$ for all $1 \leq i \leq K-1$.

Now let us consider different ways to chain the code blocks. 
For simplicity suppose that the number of decoders $K$ is a power of $2$. 
Recall that we first perform the chaining for decoders 1 and 2, then for
decoders 1,2 and 3, etc. This procedure can be naturally represented by a
tree as illustrated in the left part of Fig. \ref{f1}. 
We need to do the chaining construction whenever we increase the depth of the encoding tree.
Analogously to \eqref{rate}, for general encoding tree we have $\sum_{i=1}^{D} \frac{H_0}{t_i} \leq \Delta,$
where $D$ is the depth of the tree and $t_i$ is the number of chaining blocks at nodes of depth $i$.  Proceeding as in \eqref{length} we find $N\prod_{i=1}^{D}t_i \geq N\left( \frac{DH_0}{\Delta} \right)^D.$
It is clear that in order to obtain shorter encoding blocks we must make $D$
as small as possible. This is accomplished by balancing the encoding tree through
rearranging the chaining steps. This rearranging results into the tree depth
$D=\lceil\log_2K\rceil.$  Performing the chaining as shown in the right part of Fig.~\ref{f1} will reduce the total block length from $N\big( \frac{(K-1)H_0}{\Delta} \big)^{K-1}$ to 
   $$
N\Big( \frac{\lceil\log_2K\rceil H_0}{\Delta} \Big)^{\lceil\log_2K\rceil}.
   $$

\begin{figure}
\centering
\begin{forest}
  [,calign=fixed edge angles,
calign primary angle=-30, calign secondary angle=30
  [ ,calign=fixed edge angles,
calign primary angle=-30, calign secondary angle=30
[,calign=fixed edge angles,
calign primary angle=-30, calign secondary angle=30[1,tier=b][2,tier=b]][3,tier=b]][4,tier=b]
  ]
\end{forest}\quad\begin{forest} for tree={anchor=center}
  [,calign=fixed edge angles,
calign primary angle=-30, calign secondary angle=30
    [,calign=fixed edge angles,
calign primary angle=-30, calign secondary angle=30[1][2]]
 [,calign=fixed edge angles,
calign primary angle=-30, calign secondary angle=30[3][4]]
  ]
\end{forest}
\caption{The original (left) and improved (right) scheme for K=4}
\label{f1}
\end{figure}

\section{A generalization of the universal source coding problem}\label{3}
In this section we generalize the coding scheme proposed in Section \ref{2} to a more complicated scenario with the aim of using the results
developed here in a joint source-channel coding problem addressed in the next section.

The notation $(X^{1:n},{Y_1}^{1:n},...,{Y_K}^{1:n})$ below has the same meaning as in Section \ref{s1}. We consider the following 
source coding problem. There are $2^{K_1}-1$ encoders and $K$ decoders, where $K_1 \ge K$. The joint distribution $P_{X,Y_1,\dots,Y_K}$ is known to all the encoders and decoders. The encoders $f_{{[K]}}^{n,\cA}:\cX^n \to \cX^{nR_{{[K]}}^{\cA}}$ are indexed by subsets $\cA \subset [K_1],\cA \neq \emptyset$, where $R_{{[K]}}^{\cA}$ is the rate of the encoder. Define 
  $$
  R_{{[K]}}^k=\sum\limits_{\cA: k\in \cA }R_{{[K]}}^{\cA}.
  $$
  The $k$th decoder only has access to $f_{{[K]}}^{n,\cA}(X^{1:n})$ for $\cA \ni k$ and the realization of $Y_k^{1:n}$. Note that none of the $K$ decoders uses the encoded sequence $f_{{[K]}}^{n,\cA}(X^{1:n})$ for $\cA \subset [K_1] \backslash [K].$ We define these encoders simply for the convenience of the following inductive proof. We now present a coding scheme such that all the decoders can recover $X^{1:n}$ losslessly.

For simplicity, we assume that the source alphabet $\cX=\{0,1\},$ and only use the binary entropy function hereafter,
omitting the subscript $q$ from the notation.

\begin{definition}\label{def:cover}
Let $K_1\geq K$ and let $\{a_k\}_{k=1}^{K}$, $\{\bar{R}^{\cA}\}_{\cA \subset [K_1],\cA \neq \emptyset}$ be two sets of nonnegative real numbers. We say that $\{\bar{R}^{\cA}\}_{\cA \subset [K_1],\cA \neq \emptyset}$ \emph{covers} $\{a_k\}_{k=1}^K$ if
  $$
\sum_{\cA \ni k}\bar{R}^{\cA}>a_k
  $$
for $1 \leq k \leq K.$
\end{definition}

\begin{proposition}\label{prop}
Let $K_1\geq K$ and let $\epsilon>0$. Suppose that
a set of nonnegative real numbers $\{\bar{R}^{\cA}\}_{\cA \subset [K_1],\cA \neq \emptyset}$ 
covers $\{H(X|Y_k)\}_{k=1}^K.$ There are integers $t$ and $m_0$ such that for any $m \geq m_0,$ there exist encoders
  $$
f_{ [K]}^{n,\cA}:\cX^n \to \cX^{nR_{{[K]}}^{\cA}}, \quad\cA \subset [K_1],\cA \neq \emptyset
  $$
and $K$ decoders
  $$
g_{{[K]}}^{n,k}:\cX^{nR_{{[K]}}^k} \times \cY_k^n \to \cX^n, 1\leq k \leq K
  $$
such that $n=t2^m,$ the rates satisfy $R_{{[K]}}^{\cA}< \bar{R}^{\cA}$ for all $\cA \subset [K_1],\cA \neq \emptyset,$ 
and the probability of error satisfies
\begin{align*}
P_{{[K]}}^{n,k}=\Pr[X^{1:n} \neq g_{{[K]}}^{n,k} ( \{f_{{[K]}}^{n,\cA}(X^{1:n})\}_{\cA \ni k},Y_k^{1:n})] < \epsilon
\end{align*}
for all $1 \leq k \leq K$.
\end{proposition}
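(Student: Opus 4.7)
The plan is to induct on $K$, extending the chaining of Theorem~\ref{mth} to the multi-encoder setting by splitting each encoder's rate budget between a ``forward'' sub-scheme serving decoders $[K-1]$ and a ``reverse'' sub-scheme serving decoder $K$, and then combining the two with coordinate-wise addition across a chain of super-blocks. The base case $K=1$ is handled by invoking the $K=1$ case of Theorem~\ref{mth}: produce a single polar compressor of rate $R_1$ strictly between $H(X|Y_1)$ and $\sum_{\cA\ni 1}\bar R^\cA$, then partition its $nR_1$ output bits into pieces of sizes $nR_1\bar R^\cA/\sum_{\cA'\ni 1}\bar R^{\cA'}<n\bar R^\cA$, one per $\cA\ni 1$; trivial constant encoders handle $\cA\not\ni 1$.

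For the inductive step, assume the claim for $K-1$. Choose $\epsilon'>0$ small enough that $(1-\epsilon')\sum_{\cA\ni k}\bar R^\cA>H(X|Y_k)$ for every $k\in[K]$, put $\bar R^\cA_A=\bar R^\cA_B=(1-\epsilon')\bar R^\cA$, and fix a chaining parameter $t_2$ with $(t_2+1)(1-\epsilon')\le t_2$. The inductive hypothesis, applied with target rates $\{\bar R^\cA_A\}$, the $K-1$ decoders $[K-1]$, and error budget $\epsilon/(2t_2)$, yields a sub-scheme $A$ with encoders $f_A^{n_1,\cA}$ on blocks of length $n_1$; the base case, applied with target rates $\{\bar R^\cA_B\}$ on the single decoder $K$ with the same error budget, yields a sub-scheme $B$ with encoders $f_B^{n_1,\cA}$ on blocks of length $n_1$. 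The rates are strictly below the respective bars, and the covering inequalities required for the two sub-schemes are exactly what the choice of $\epsilon'$ guarantees.

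The final encoders are obtained by chaining over $t_2$ super-blocks $x[1],\dots,x[t_2]$ of length $n_1$ (total $n=t_2 n_1$): for $\cA\not\ni K$ set $f^{n,\cA}[i]=f_A^{n_1,\cA}(x[i])$, $i\in[t_2]$; for $\cA=\{K\}$ set $f^{n,\cA}[i]=f_B^{n_1,\cA}(x[i])$; for $\cA\ni K$ with $\cA\cap[K-1]\neq\emptyset$ emit $t_2+1$ pieces, namely $f^{n,\cA}[1]=f_A^{n_1,\cA}(x[1])$, $f^{n,\cA}[i]=f_A^{n_1,\cA}(x[i])\oplus f_B^{n_1,\cA}(x[i-1])$ for $2\le i\le t_2$, and $f^{n,\cA}[t_2+1]=f_B^{n_1,\cA}(x[t_2])$ (padding the shorter summand with zeros as in Theorem~\ref{mth}); the remaining $\cA\subset[K_1]\setminus[K]$ use trivial encoders. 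A direct count gives $R^{\cA}\le\tfrac{t_2+1}{t_2}(1-\epsilon')\bar R^\cA<\bar R^\cA$ in every case.

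Each decoder $k\in[K-1]$ decodes forward: block $1$ is recovered from $f^{n,\cA}[1]=f_A^{n_1,\cA}(x[1])$ via sub-scheme $A$, and for $i\ge 2$ the decoder first subtracts the now-computable $f_B^{n_1,\cA}(\hat x[i-1])$ from each ``both''-type $f^{n,\cA}[i]$ to recover $f_A^{n_1,\cA}(x[i])$, then invokes sub-scheme $A$ again. Decoder $K$ decodes in reverse from block $t_2$ (using $f^{n,\cA}[t_2+1]$ for ``both''-type $\cA$ and $f^{n,\cA}[t_2]$ for $\cA=\{K\}$), and for $i<t_2$ subtracts $f_A^{n_1,\cA}(\hat x[i+1])$ from $f^{n,\cA}[i+1]$ to recover $f_B^{n_1,\cA}(x[i])$. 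A union bound over the $t_2$ super-blocks gives per-decoder error below $\epsilon$. The main obstacle will be the joint tuning of $\epsilon'$ and $t_2$: $\epsilon'$ must be small enough to preserve the strict covering in both sub-schemes, yet large enough that the chaining inflation factor $(t_2+1)/t_2$ stays below $1/(1-\epsilon')$---both requirements are compatible because the original covering has positive slack, and once they are met the forward and backward decoding orders mesh automatically through the boundary pieces $f^{n,\cA}[1]$ and $f^{n,\cA}[t_2+1]$.
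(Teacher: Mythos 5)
Your proposal is correct and follows essentially the same route as the paper's proof: induction on the number of decoders, a uniform shrinking of the rate targets (your $(1-\epsilon')\bar{R}^{\cA}$ is a concrete instantiation of the paper's auxiliary $\hat{R}^{\cA}<\bar{R}^{\cA}$, with your condition $(t_2+1)(1-\epsilon')\le t_2$ playing the role of the paper's choice of $t_2$), chaining $t_2$ super-blocks of the $[K-1]$-sub-scheme and the single-decoder sub-scheme via XOR with boundary pieces, forward decoding for $[K-1]$ and reverse decoding for decoder $K$, and a union bound over blocks. One bookkeeping slip worth fixing: sets $\cA$ with $K\in\cA$, $\cA\cap[K-1]=\emptyset$ and $\cA\neq\{K\}$ (which exist when $K_1>K$ and do contribute to decoder $K$'s budget $\sum_{\cA\ni K}\bar{R}^{\cA}$) fall through your case list; treat them exactly like $\{K\}$, or simply apply the chaining formula uniformly to every nonempty $\cA\subset[K_1]$ as the paper does.
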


\begin{proof}
We prove the proposition by induction, following the ideas in Theorem \ref{mth}. Let us first establish a stronger claim for $K=1$, namely, for any $\epsilon >0$ and any integer $t$, there is an integer $m_0$ such that for any $m \geq m_0$, we can find encoders and a decoder satisfying the above conditions. No separate
proof is needed because the rate and error probability constraints are essentially the same as those in Theorem \ref{mth} 
in the case of $K=1.$

Let us make the induction step. Suppose that the claim holds for $K=J.$ 
Without loss of generality we assume that $\bar{R}^{\cA}>0$ for all $\cA \subset [K_1],\cA \neq \emptyset$. 
 If $\{\bar{R}^{\cA}\}_{\cA \subset [K_1],\cA \neq \emptyset}$ covers $\{H(X|Y_j)\}_{j=1}^{J+1}$, then there exists another set of positive numbers $\{\hat{R}^{\cA}\}_{\cA \subset [K_1],\cA \neq \emptyset}$ covering $\{H(X|Y_j)\}_{j=1}^{J+1}$ and satisfying $\hat{R}^{\cA}<\bar{R}^{\cA}$ for all $\cA \subset [K_1],\cA \neq \emptyset$. Since $\{\hat{R}^{\cA}\}_{\cA \subset [K_1],\cA \neq \emptyset}$ covers $\{H(X|Y_j)\}_{j=1}^{J+1}$, it also covers $\{H(X|Y_j)\}_{j=1}^J$. By the induction hypothesis,
 for any $\epsilon >0$, there are an integer $t_1$ and a corresponding $m_1$ such that for any $m \geq m_1$ we can find encoders
   $$
f_{{[J]}}^{n_1,\cA}:\cX^{n_1} \to \cX^{n_1R_{{[J]}}^{\cA}}, \quad \cA \subset [K_1],\cA \neq \emptyset
   $$
and $J$ decoders
   $$
g_{{[J]}}^{n_1,j}:\cX^{n_1R_{{[J]}}^j} \times \cY_j^{n_1} \to \cX^{n_1}, 1\leq j \leq J
   $$
where the block length $n_1=t_1 2^m$, the rate satisfies
\begin{equation}\label{srj}
R_{{[J]}}^{\cA}< \hat{R}^{\cA}
\end{equation}
for all $\cA \subset [K_1],\cA \neq \emptyset,$ and the probability of error satisfies
\begin{equation}\label{sej}
\begin{aligned}
P_{{[J]}}^{n_1,j} &=\Pr[X^{1:n_1} \neq g_{{[J]}}^{n_1,j} ( \{ f_{{[J]}}^{n_1,\cA} (X^{1:n_1})\}_{\cA \ni j},Y_j^{1:n_1})]\\
& < \epsilon /t_2
\end{aligned}
\end{equation}
for all $1 \leq j \leq J$, where 
\begin{equation}\label{st2}
t_2=\Big\lceil \frac{1}{\min_{\cA \subset [K_1]}(\bar{R}^{\cA}-\hat{R}^{\cA})} \Big\rceil +1.
\end{equation}
Moreover, there is an $m_2$ such that for any integer $m \geq m_2$ we can find encoders
$$
f_{ \{J+1\}}^{n_1,\cA}:\cX^{n_1} \to \cX^{n_1R_{\{J+1\}}^{\cA}}, \quad \cA \subset [K_1],\cA \neq \emptyset
$$
and a decoder
$$
g_{\{J+1\}}^{n_1,J+1}:\cX^{n_1R_{\{J+1\}}^{J+1}} \times \cY_{J+1}^{n_1} \to \cX^{n_1}
$$
in which $R_{\{J+1\}}^{J+1}=\sum_{\cA \ni J+1}R_{\{J+1\}}^{\cA}$, the block length $n_1=t_1 2^m$, the rate
\begin{equation}\label{srj1}
R_{\{J+1\}}^{\cA}< \hat{R}^{\cA}
\end{equation}
and the probability of error
\begin{equation}\label{sej1}
\begin{aligned}
&P_{\{J+1\}}^{n_1,J+1} \\
&=\Pr[X^{1:n_1} \neq g_{\{J+1\}}^{n_1,J+1} ( \{f_{ \{J+1\}}^{n_1,\cA}(X^{1:n_1})\}_{\cA \ni J+1},Y_{J+1}^{1:n_1})]\\
& < \epsilon /t_2.
\end{aligned}
\end{equation}

Now let us prove the claim for $K=J+1$. Let $t=t_1t_2, m_0=\max(m_1,m_2)$. For any integer $m \geq m_0$, let $n_1=t_1 2^m$ 
and $n=t_1t_2 2^m.$ 
We can find encoders $f_{{[J]}}^{n_1,\cA}, f_{ \{J+1\}}^{n_1,\cA},\cA \subset [K_1],\cA \neq \emptyset$ and 
decoders $g_{{[J]}}^{n_1,j}$, $1 \leq j \leq J$, $g_{\{J+1\}}^{n_1,J+1}$ satisfying \eqref{srj}-\eqref{sej} 
and \eqref{srj1}-\eqref{sej1}. Define encoders $f_{{[J+1]}}^{n,\cA}:\cX^{n} \to \cX^{nR_{{[J+1]}}^{\cA}}$ 
for all $\cA \subset [K_1],\cA \neq \emptyset$ as shown in \eqref{eq:e3} at the top of the next page.
\begin{figure*}[!t]
\begin{equation}\label{eq:e3}
f_{{[J+1]}}^{n,\cA}(x^{1:n})=\{f_{{[J]}}^{n_1,\cA}(x^{1:n_1}), \{f_{{[J]}}^{n_1,\cA}(x^{(in_1+1):((i+1)n_1)}) 
\oplus f_{ \{J+1\}}^{n_1,\cA}(x^{((i-1)n_1+1):(in_1)})  \}_{i=1}^{t_2-1}, f_{ \{J+1\}}^{n_1,\cA}(x^{(n-n_1+1):n}) \}
\end{equation}
\begin{gather}\label{eq:e4-1}
\hat{x}_j^{1:n_1}=g_{{[J]}}^{n_1,j}(\{f_{{[J]}}^{n_1,\cA}(x^{1:n_1})\}_{\cA \ni j},y_j^{1:n_1}), \quad j=1,\dots,J
\end{gather}
    \begin{multline}
\hat{x}_j^{(in_1+1):((i+1)n_1)}
=g_{{[J]}}^{n_1,j} \Big( \Big\{f_{\{J+1\}}^{n_1,\cA}(x^{((i-1)n_1+1):(in_1)}) \oplus f_{{[J]}}^{n_1,\cA}(x^{(in_1+1):((i+1)n_1)}) \ominus  f_{ \{J+1\}}^{n_1,\cA}(\hat{x}_j^{((i-1)n_1+1):(in_1)})\Big\}_{\cA \ni j},\\ y_j^{(in_1+1):((i+1)n_1)}\Big), \quad \quad i=1,\dots,t_2-1, \quad j=1,\dots,J
 \label{eq:e4-2}
   \end{multline}
   \begin{equation}\label{eq:e5}
\hat{x}_{J+1}^{(n-n_1+1):n}
=g_{\{J+1\}}^{n_1,J+1}( \{ f_{ \{J+1\}}^{n_1,\cA} (x^{(n-n_1+1):n}) \}_{\cA \ni J+1} ,y_{J+1}^{(n-n_1+1):n})
    \end{equation}
   \begin{multline}\label{eq:e6}
\hat{x}_{J+1}^{((i-1)n_1+1):(in_1)}=g_{\{J+1\}}^{n_1,J+1}\Big( \Big\{ f_{ \{J+1\}}^{n_1,\cA}(x^{((i-1)n_1+1):(in_1)})\oplus 
f_{{[J]}}^{n_1,\cA}(x^{(in_1+1):((i+1)n_1)})
 \ominus  f_{{[J]}}^{n_1,\cA}(\hat{x}_{J+1}^{(in_1+1):((i+1)n_1)}) \Big\}_{\cA \ni J+1},\\y_{J+1}^{((i-1)n_1+1):(in_1)}\Big), \quad \quad i=t_2-1,\dots,1
   \end{multline}

%

\vspace*{4pt}\hrulefill
\end{figure*}
\remove{  \begin{multline*}
f_{{[J+1]}}^{n,\cA}(x^{1:n})=\{f_{{[J]}}^{n_1,\cA}(x^{1:n_1}), \{ f_{ \{J+1\}}^{n_1,\cA}(x^{(in_1-n_1+1):(in_1)}) \\
\hspace*{.1in}\oplus f_{{[J]}}^{n_1,\cA}(x^{(in_1+1):(in_1+n_1)}) \}_{i=1}^{t_2-1}, f_{ \{J+1\}}^{n_1,\cA}(x^{(n-n_1+1):n}) \},
   \end{multline*}
where $\oplus$ denotes coordinate-wise addition. For two sequences with different lengths, we first add zeros at the end of the shorter sequence to equalize them then perform the coordinate-wise addition.}
 By \eqref{srj}, \eqref{st2}, and \eqref{srj1} we have
   \begin{align*}
R_{{[J+1]}}^{\cA} &\leq \frac{(t_2+1)n_1 \max(R_{{[J]}}^{\cA},R_{\{J+1\}}^{\cA})}{t_2n_1}\\
&\leq \max(R_{{[J]}}^{\cA},R_{\{J+1\}}^{\cA})+\frac{1}{t_2}\\
& < \bar{R}^{\cA}
  \end{align*}
for all $\cA \subset [K_1],\cA \neq \emptyset$. Thus the rate constraint is satisfied. Note that $x^{1:n}$ consists of $t_2$ blocks of length $n_1$. The first $J$ decoders decode successively from block $1$ to block $t_2$ while decoder $J+1$ decodes in reverse order. We define $\hat{x}_j^{1:n}$ as functions of $x^{1:n}$ and $y_j^{1:n}$ successively as shown in
Eqns.~\eqref{eq:e4-1},~\eqref{eq:e4-2} at the top of the next page.
\remove{
as follows,
$$
\hat{x}_j^{1:n_1}=g_{{[J]}}^{n_1,j}(\{f_{{[J]}}^{n_1,\cA}(x^{1:n_1})\}_{\cA \ni j},y_j^{1:n_1}) 
$$
\begin{align*}
&\hat{x}_j^{(in_1+1):(in_1+n_1)}=g_{{[J]}}^{n_1,j} \left( \{f_{ \{J+1\}}^{n_1,\cA}(x^{(in_1-n_1+1):(in_1)}) \right. \\
&  \oplus f_{{[J]}}^{n_1,\cA}(x^{(in_1+1):(in_1+n_1)}) \ominus  f_{ \{J+1\}}^{n_1,\cA}(\hat{x}_j^{(in_1-n_1+1):(in_1)})\}_{\cA \ni j},\\
&\left. y_j^{(in_1+1):(in_1+n_1)}\right)
\end{align*}
for $i=1,\dots,t_2-1, 1 \leq j \leq J$, where $\ominus$ denotes coordinate-wise subtraction.}
 
Decoders $g_{{[J+1]}}^{n,j}$, $1 \leq j \leq J$ are defined as follows:
   $$
g_{{[J+1]}}^{n,j}( \{ f_{{[J+1]}}^{n,\cA} (x^{1:n}) \}_{\cA \ni j} ,y_j^{1:n})=\hat{x}_j^{1:n}.
   $$
Let $\hat{X}_j^{1:n}=g_{{[J+1]}}^{n,j}( \{ f_{{[J+1]}}^{n,\cA}(X^{1:n}) \}_{\cA \ni j} ,Y_j^{1:n})$. The error probability
    \begin{align*}
&P_{{[J+1]}}^{n,j}=\Pr[X^{1:n} \neq \hat{X}_j^{1:n}]=\Pr[X^{1:n_1} \neq \hat{X}_j^{1:n_1}]\\ 
&+\sum_{i=1}^{t_2-1} \Pr[X^{(in_1+1):(in_1+n_1)} \neq \hat{X}_j^{(in_1+1):(in_1+n_1)},X^{1:in_1}=\hat{X}_j^{1:in_1}] \\
&\leq \Pr[X^{1:n_1} \neq \hat{X}_j^{1:n_1}]\\
&\hspace*{.5in}+\sum_{i=1}^{t_2-1} \Pr[X^{(in_1+1):(in_1+n_1)} \neq \hat{X}_j^{(in_1+1):(in_1+n_1)}\mid \\&\hspace*{2in}X^{1:in_1}=\hat{X}_j^{1:in_1}].
\end{align*}
It is easy to see that each term on the right-hand side of this inequality is equal to $P_{{[J]}}^{n_1,j}$.
On account of \eqref{sej} we conclude that $P_{{[J+1]}}^{n,j} < \epsilon$ 
for $1 \leq j \leq J$. 
As for decoder $J+1$, define $\hat{x}_{J+1}^{1:n}$ as a function of $x^{1:n}$ and $y_{J+1}^{1:n}$ successively
as shown in Eqns.~\eqref{eq:e5},~\eqref{eq:e6} at the top of the next page.
\remove{
\begin{align*}
&\hat{x}_{J+1}^{(n-n_1+1):n}\\
&=g_{\{J+1\}}^{n_1,J+1}( \{ f_{ \{J+1\}}^{n_1,\cA} (x^{(n-n_1+1):n}) \}_{\cA \ni J+1} ,y_{J+1}^{(n-n_1+1):n}) 
\end{align*}
\begin{align*}
&\hat{x}_{J+1}^{(in_1-n_1+1):(in_1)}=g_{\{J+1\}}^{n_1,J+1}( \{ f_{ \{J+1\}}^{n_1,\cA}(x^{(in_1-n_1+1):(in_1)})\oplus \\
&f_{{[J]}}^{n_1,\cA}(x^{(in_1+1):(in_1+n_1)})
 \ominus  f_{{[J]}}^{n_1,\cA}(\hat{x}_{J+1}^{(in_1+1):(in_1+n_1)}) \}_{\cA \ni J+1}, \\
&y_{J+1}^{(in_1-n_1+1):(in_1)})
\end{align*}
for $i=t_2-1,\dots,1$.}
 Decoder $g_{{[J+1]}}^{n,J+1}$ is defined as
$$
g_{{[J+1]}}^{n,J+1}(\{ f_{{[J+1]}}^{n,\cA}(x^{1:n}) \}_{\cA \ni J+1} ,y_{J+1}^{1:n})=\hat{x}_{J+1}^{1:n}.
$$
The error probability $P_{{[J+1]}}^{n,J+1}$ can be bounded in exactly the same way as above. We conclude that $P_{{[J+1]}}^{n,j} < \epsilon$ for $1 \leq j \leq J+1,$ which completes the proof.
\end{proof}
Note that this proof is also constructive. It can be easily seen from the proof that given an achievable rate constraint and an error probability threshold $\epsilon>0,$ we can choose arbitrarily large $t$ and $m$ such that there are encoders and decoders with block length $t2^m$ satisfying these constraints. 

\section{Slepian-Wolf coding over broadcast channels}\label{4}
\subsection{Problem Statement}
We consider the following 
communication problem formulated by Tuncel in \cite{Tuncel06}. Below the notation $(X^{1:n},{Y_1}^{1:n},...,{Y_K}^{1:n})$ has 
the same meaning as in Section \ref{s1}.  
Now the encoder is required to map $X^{1:n}$ to a sequence $U^{1:l}$, where $U$ takes values in $\cU=\{0,1\}$. 
The encoded sequence $U^{1:l}$ is transmitted through a memoryless broadcast channel $W(v_1,...,v_K|u)$ with the input
alphabet $\cU$ and finite output alphabets $\cV_1,\dots,\cV_K.$ Let $V_k^{1:l}$ denote the version of $U^{1:l}$ 
received from the channel by Decoder $k, 1 \leq k \leq K$. 

The decoder uses $V_k^{1:l}$ and $Y_k^{1:n}$ to reconstruct $X^{1:n}.$ We say that rate $\kappa$ (measured in channel uses per symbol) is achievable if there exist a sequence of encoders
$$
f_{{[K]}}^{(l,n)}:\cX^n \to \cU^l
$$
and $K$ sequences of decoders
$$
g_{{[K]}}^{(l,n),k}:\cV_k^l \times \cY_k^n \to \cX^n, 1\leq k \leq K
$$
such that the probability of error
$$
P_{{[K]}}^{(l,n),k}=\Pr[X^{1:n} \neq g_{{[K]}}^{(l,n),k}(V_k^{1:l},Y_k^{1:n})]
$$
vanishes uniformly for $1 \leq k \leq K$ as $n,l \to \infty$ while $\frac{l}{n} \to \kappa$.

In \cite{Tuncel06}, Tuncel proved the following theorem which characterizes the set of achievable rates.
\begin{theorem} \label{thm:T} The value
$\kappa$ is achievable if and only if there exists $P_U(u)$ such that
   \begin{equation}\label{condition}
H(X|Y_k)< \kappa I(U;V_k)
   \end{equation}
for all $1 \leq k \leq K$.
\end{theorem}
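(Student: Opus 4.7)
The ``only if'' direction is Tuncel's converse from \cite{Tuncel06}, so I concentrate on the achievability. Fix a distribution $P_U$ satisfying \eqref{condition} and choose positive slacks small enough to absorb the overheads that follow.

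The high-level plan is to marry polarization of the broadcast channel with Proposition \ref{prop}. Polarizing $W$ with input $U\sim P_U$ at block length $l$ produces, for each receiver $k$, a ``good'' index set
\[
\cG_k=\{i\in[l]:P_e(U_i\mid U^{1:i-1},V_k^{1:l})\le 2^{-l^{\beta}}\},
\]
of size $|\cG_k|/l\to I(U;V_k)$. Since the $\cG_k$ are not nested in general, I would partition $[l]$ by type: for each nonempty $\cA\subset[K]$, set $\cP_{\cA}=\{i\in[l]:\{k:i\in\cG_k\}=\cA\}$, so that $\sum_{\cA\ni k}|\cP_{\cA}|=|\cG_k|$. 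These are exactly the coordinates of $u^{1:l}$ whose value, after successive-cancellation decoding on $V_k^{1:l}$, is recoverable by precisely the subset $\cA$ of receivers.

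Next I would invoke Proposition \ref{prop} with $K_1=K$ and targets $\bar{R}^{\cA}=\kappa|\cP_{\cA}|/l-\delta'$ for each nonempty $\cA\subset[K]$. The covering condition $\sum_{\cA\ni k}\bar{R}^{\cA}>H(X|Y_k)$ reduces, after $l\to\infty$, to $\kappa I(U;V_k)-2^{K-1}\delta'>H(X|Y_k)$, which holds by hypothesis once $\delta'$ is small. The proposition then returns encoders $\{f_{{[K]}}^{n,\cA}\}$ and decoders $\{g_{{[K]}}^{n,k}\}$ of vanishing error, with rates $R_{{[K]}}^{\cA}<\bar{R}^{\cA}$, so that the binary string $f_{{[K]}}^{n,\cA}(X^{1:n})$ has fewer than $|\cP_{\cA}|$ bits. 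The joint source-channel encoder embeds each such string into the coordinates $\cP_{\cA}$ of the channel word $u^{1:l}$; coordinates outside $\bigcup_k\cG_k$, together with a vanishing fraction needed for distribution shaping, are filled with randomness shared by the encoder and all receivers and distributed in the standard asymmetric-input polar-coding fashion so that $u^{1:l}$ is close in distribution to $P_U^{\otimes l}$. Receiver $k$ performs successive cancellation to recover the bits at coordinates in $\cG_k$, extracts the family $\{f_{{[K]}}^{n,\cA}(X^{1:n})\}_{\cA\ni k}$, and then applies $g_{{[K]}}^{n,k}$ to this family together with $Y_k^{1:n}$ to reproduce $X^{1:n}$.

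A union bound over channel-decoding failures (at most $l\cdot 2^{-l^{\beta}}$ per receiver) together with the source-decoding bound from Proposition \ref{prop} yields an overall error that tends to zero as $n,l\to\infty$ with $l/n\to\kappa$. The principal obstacle is the distribution-matching step: channel polarization is valid only when $u^{1:l}$ is close in distribution to $P_U^{\otimes l}$, whereas the source-compressed bits are essentially arbitrary. I would resolve this with the standard shaping construction for asymmetric-input polar channel coding --- in effect a chaining argument in the spirit of Section \ref{2} over the positions reserved for shaping --- which costs only $o(l)$ rate that is absorbed by $\delta'$.
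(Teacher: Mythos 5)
Your proposal matches the paper's own construction in Section \ref{4}: the paper likewise defers the converse to Tuncel, partitions the polarized channel indices into sets $\cI_{\cA}^{(l)}$ according to which subset of receivers can decode them, verifies that the normalized sizes $\{\frac{1}{n}|\cI_{\cA}^{(l)}|\}$ cover $\{H(X|Y_k)\}$ so that Proposition \ref{prop} applies, embeds the compressed strings into those positions, and handles the input-distribution shaping via the Honda--Yamamoto asymmetric-input technique. The only cosmetic difference is that the paper works with Bhattacharyya parameters and explicitly intersects $\cL_{U|V_k}^{(N)}$ with $\cH_U^{(N)}$, which your appeal to the standard asymmetric-input recipe implicitly covers.
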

In the next section we give an explicit scheme that uses the construction of Section \ref{3} to achieve the coding rates guaranteed
by this theorem.

\subsection{Coding Scheme}
For random variables $(X,Y) \sim P_{X,Y}$, 
where $X$ is binary and $Y$ takes values in arbitrary discrete alphabet $\cY$, define the Bhattacharyya parameter as follows:
$$
Z(X|Y)=2\sum_{y \in \cY}P_Y(y)\sqrt{P_{X|Y}(0|y)P_{X|Y}(1|y)}.
$$

Let $\kappa$ and $P_U(u)$ satisfy condition  \eqref{condition} of Theorem \ref{thm:T}. Assume that $N=2^m$ for some integer $m$. 
Given the random vector $(U^{1:N}, V_1^{1:N},\dots, V_K^{1:N})$ of $N$ independent drawings from the distribution 
$P_{U,V_1,...,V_K}=W(V_1,...,V_K|U)P_U$, define $D^{1:N}=U^{1:N}G_{N}$. For $\beta\in(0,1/2)$, consider the sets
$$
\cL_{U|V_k}^{(N)}=\{i \in [N]:Z(D^i|D^{1:i-1},V_k^{1:N}) \leq 2^{-N^{\beta}}\}
$$
$$
\cH_U^{(N)}=\{i \in [N]:Z(D^i|D^{1:i-1}) \geq 1-2^{-N^{\beta}}\}
$$
$$
\cL_U^{(N)}=\{i \in [N]:Z(D^i|D^{1:i-1}) \leq 2^{-N^{\beta}}\}
$$
for $1 \leq k \leq K$. Let $\cI_k^{(N)}=\cL_{U|V_k}^{(N)} \cap \cH_U^{(N)}$ for $1 \le k \le K$. Owing to the results of
Honda and Yamamoto \cite{Honda13}, we can use the indices in $\cI_k$ to transmit information over $W$ for the $k$th decoder
using successive cancellation decoder of polar codes. Moreover,
$$
\lim_{m \to \infty} \frac{1}{N} |\cI_k^{(N)}|=I(U;V_k).
$$
Given an integer $s$, let $l=sN$ and define index sets $\cI_{k}^{(l)}, \cH_{U}^{(l)}$ and $\cL_{U}^{(l)}, 1 \leq k \leq K$ as follows:
$$
\cI_{k}^{(l)}=\{i \in [l]: (i-(\lceil {i}/{N} \rceil -1)N) \in \cI_k^{(N)} \}.
$$
$$
\cH_{U}^{(l)}=\{i \in [l]: (i-(\lceil {i}/{N} \rceil-1)N) \in \cH_U^{(N)} \}.
$$
$$
\cL_{U}^{(l)}=\{i \in [l]: (i-(\lceil {i}/{N} \rceil -1)N) \in \cL_U^{(N)} \}.
$$
It is easy to see that
$$
\frac{1}{l} |\cI_{k}^{(l)}|=\frac{1}{N} |\cI_k^{(N)}|
$$

For every subset $\cA \subset [K], \cA \neq \emptyset$, we define 
$$
\cI_{\cA}^{(l)}=\big( \cap_{k \in \cA} \cI_{k}^{(l)} \big) \cap \big( \cap_{k \in \cA^c} ({\cI_{k}^{(l)}})^c \big).
$$
It is easy to see that
$
\cI_{k}^{(l)}=\bigcup_{\cA \ni k} \cI_{\cA}^{(l)},
$
and the sets $\cI_{\cA}^{(l)}$ are pairwise disjoint for different $\cA$.

Given a rate value $\kappa$ satisfying \eqref{condition}, we can always find an integer $t$ large enough such that 
$\frac{\lceil \kappa t \rceil}{t}$ is as close to $\kappa$ as desired. Let $s=\lceil \kappa t \rceil, l=sN, n=tN$. 
We can also choose a sufficiently large $N$ such that $\frac{1}{l} |\cI_{k}^{(l)}|$ is arbitrarily close to $I(U;V_k)$. 
Thus $\frac{1}{n} |\cI_{k}^{(l)}|$ can be made arbitrarily close to $\kappa I(U;V_k)$. 
By \eqref{condition}, if we choose $t$ and $N$ large enough, then the set of numbers
$\{\frac{1}{n}| \cI_{\cA}^{(l)}|\}_{\cA \subset [K], \cA \neq \emptyset}$ will cover $\{H(X|Y_k)\}_{k=1}^K$ 
in the sense of Def.~\ref{def:cover}. Therefore,
we can use the coding scheme proposed in the proof of Proposition \ref{prop} to find encoders 
$f_{[K]}^{n,\cA}, \cA \subset [K], \cA \neq \emptyset$ and decoders $g_{{[K]}}^{n,k}, 1 \le k \le K$ such that the error probability is 
arbitrarily small and the rate values of the encoders satisfy 
$nR_{{[K]}}^{\cA} <|\cI_{\cA}^{(l)}|$ for all $\cA \subset [K], \cA \neq \emptyset$. 

Given a realization $x^{1:n}$ of the source, the encoder produces a sequence $d^{1:l}$ as follows. 
First, for every $\cA \subset [K], \cA \neq \emptyset$ the coordinates $d^i,i\in \cI_{\cA}^{(l)}$ are filled with the sequence 
$f_{[K]}^{n,\cA}(x^{1:n}).$ 
Since $nR_{{[K]}}^{\cA} <|\cI_{\cA}^{(l)}|$, there will be some extra positions in $\cI_{\cA}^{(l)}$. 
The encoder fills these extra positions with samples of independent uniform binary random variables. 
For $i \in \cL_{U}^{(l)}$, the encoder sets  
  $$
  d^i=\arg\!\!\!\max\limits_{a \in \{0,1\}}\!\!P_{D^i|D^{((\lceil \frac{i}{N} \rceil -1)N+1):(i-1)}}(a|d^{((\lceil \frac{i}{N} \rceil -1)N+1):(i-1)}).
  $$
   For all the other indices, the encoder again sets $d^i$ to be $0$ or $1$ uniformly, independent of everything else. 
   
   In the next step, the encoder calculates 
   \begin{equation}\label{eq:d1}
   u^{1:l}=d^{1:l}\text{diag}(G_N,\dots,G_N)
   \end{equation}
where the diagonal matrix is formed of $s$ identical blocks $G_N$; see \eqref{eq:diag}.
The sequence $u^{1:l}$ is sent through the channel $W$. 
Decoder $k$ can recover the indices in $\cI_{k}^{(l)}=\bigcup_{\cA \ni k} \cI_{\cA}^{(l)}$. Thus decoder $k$ knows 
$f_{ [K]}^{n,\cA}(X^{1:n})$ for all $\cA \ni k$. Together with the realization of $Y_k^{1:n}$, it can now recover $X^{1:n}$.

In the channel transmission part, as indicated in \cite{Honda13}, we need to make the distribution of 
$D^{({\cL_{U}^{(l)}})^c}$ close to i.i.d. uniformly distributed binary random variables. 
Since $\cI_{\cA}^{(l)} \subset (\cL_{U}^{(l)})^c$ for all 
$\cA \subset [K], \cA \neq \emptyset$, we need to make sure that the distribution of the encoded sequence 
$f_{[K]}^{n,\cA}(X^{1:n})$ is close to a uniform distribution on $\{0,1\}^{nR_{{[K]}}^{\cA}}.$ 

Define random variables $C^{1:n}$ by 
  \begin{equation*}
   C^{1:n}=X^{1:n}\text{diag}(G_N,\dots,G_N)
   \end{equation*}
where the block-diagonal matrix on the right is again formed 
as in \eqref{eq:diag}. It can be inferred from the proof of Proposition \ref{prop} that 
$f_{[K]}^{n,\cA}(X^{1:n})$ consists of linear combinations of the bits in 
$C^{\cH_{X}^{(n)}},$ where the set 
   $
   \cH_{X}^{(n)}
  $
is defined as follows:
  $$
  \cH_{X}^{(n)}=\{i \in [n]: (i-(\lceil {i}/{N} \rceil-1)N) \in \cH_X^{(N)} \}.
  $$
 Since the distribution of $C^{\cH_{X}^{(n)}}$ is very close to a uniform distribution, 
we only need to make sure that the linear combinations in the encoded sequence are linearly independent. 
Thus in the encoding procedure, if we find an encoded bit to be a linear combination of the bits in the previously 
encoded sequence, we replace it by a uniform random variable independent of any other random variables. 

The coding scheme described in this section achieves the set of transmission rates in Theorem \ref{thm:T}. The scheme is explicit and relies
on low-complexity encoding and decoding procedures inherited from the basic polar code construction.

\vspace*{.1in}
{\sc Acknowledgment:} We are grateful to our colleague Prakash Narayan for bringing the paper \cite{Tuncel06} to our attention.

\bibliographystyle{IEEEtran}
\bibliography{universal}

\begin{thebibliography}{10}
\providecommand{\url}[1]{#1}
\csname url@samestyle\endcsname
\providecommand{\newblock}{\relax}
\providecommand{\bibinfo}[2]{#2}
\providecommand{\BIBentrySTDinterwordspacing}{\spaceskip=0pt\relax}
\providecommand{\BIBentryALTinterwordstretchfactor}{4}
\providecommand{\BIBentryALTinterwordspacing}{\spaceskip=\fontdimen2\font plus
\BIBentryALTinterwordstretchfactor\fontdimen3\font minus
  \fontdimen4\font\relax}
\providecommand{\BIBforeignlanguage}[2]{{%
\expandafter\ifx\csname l@#1\endcsname\relax
\typeout{** WARNING: IEEEtran.bst: No hyphenation pattern has been}%
\typeout{** loaded for the language `#1'. Using the pattern for}%
\typeout{** the default language instead.}%
\else
\language=\csname l@#1\endcsname
\fi
#2}}
\providecommand{\BIBdecl}{\relax}
\BIBdecl

\bibitem{Arikan09}
E.~Ar{\i}kan, ``Channel polarization: a method for constructing
  capacity-achieving codes for symmetric binary-input memoryless channels,''
  \emph{IEEE Trans. Inform. Theory}, vol.~55, no.~7, pp. 3051--3073, 2009.

\bibitem{Korada09}
S.~B. Korada, ``Polar codes for channel and source coding,'' Ph.D.
  dissertation, EPFL, 2009.

\bibitem{Sutter13}
D.~Sutter and J.~Renes, ``Universal polar codes for more capable and less noisy
  channels and sources,'' 2013, arXiv:1312.5990.

\bibitem{Abbe10}
E.~Abbe, ``Universal source polarization and sparse recovery,'' in \emph{Proc.
  IEEE Information Theory Workshop (ITW)}, 2010, pp. 1--5.

\bibitem{Alsan13}
M.~Aslan, ``Conditions for robustness of polar codes in the presence of channel
  mismatch,'' 2013, arXiv:1303.2379.

\bibitem{Hassani09}
S.~H. Hassani, S.~B. Korada, and R.~Urbanke, ``The compound capacity of polar
  codes,'' in \emph{Proc. 47th Annual Allerton Conf. Commun. Control Comput.},
  2009, pp. 16--21.

\bibitem{Hassani13}
S.~H. Hassani and R.~Urbanke, ``Universal polar codes,'' arXiv:1307.7223, July
  2013.

\bibitem{Wang13}
E.~{\c S}a{\c s}o{\u g}lu and L.~Wang, ``Universal polarization,'' 2013,
  arXiv:1307.7495.

\bibitem{mah13}
H.~Mahdavifar, M.~{E}l {K}hamy, J.~Lee, and I.~Kang, ``Compound polar codes,''
  in \emph{Proc. Information Theory and Applications Workshop (ITA)}, 2013, pp.
  1--6.

\bibitem{Tuncel06}
E.~Tuncel, ``Slepian-{W}olf coding over broadcast channels,'' \emph{IEEE Trans.
  Inform. Theory}, vol.~52, no.~4, pp. 1469--1482, 2006.

\bibitem{Arikan10}
E.~Ar{\i}kan, ``Source polarization,'' in \emph{Proc. 2010 IEEE Int. Sympos.
  Inform. Theory}, June 2010, pp. 899--903.

\bibitem{ari09a}
E.~Ar{\i}kan and E.~Telatar, ``On the rate of channel polarization,'' in
  \emph{Proc. IEEE Int. Sympos. Inform. Theory (ISIT2009), Seoul, Korea, June
  28--July 3, 2009}, 2009, pp. 1493--1495.

\bibitem{Mori14}
R.~Mori and T.~Tanaka, ``Source and channel polarization over finite fields and
  {R}eed-{S}olomon matrices,'' \emph{IEEE Trans. Inform. Theory}, vol.~60,
  no.~5, pp. 2720--2736, 2014.

\bibitem{sas09a}
E.~{\c S}a{\c s}o{\u g}lu, E.~Telatar, and E.~Arikan, ``Polarization for
  arbitrary discrete memoryless channels,'' in \emph{Proc. 2009 IEEE
  Information Theory Workshop, Taormina, Italy, 11-16 Oct. 2009}, pp. 144--148,
  longer version available online, arXiv:0908.0302.

\bibitem{Slepian73}
D.~Slepian and J.~Wolf, ``Noiseless coding of correlated information sources,''
  \emph{IEEE Trans. Inform. Theory}, vol.~19, no.~4, pp. 471--480, 1973.

\bibitem{Honda13}
J.~Honda and H.~Yamamoto, ``Polar coding without alphabet extension for
  asymmetric models,'' \emph{IEEE Trans. Inform. Theory}, vol.~59, no.~12, pp.
  7829--7838, 2013.

\end{thebibliography}

\end{document}